\newtheorem{Theorem}{Theorem}[section]
\newtheorem{Lemma}[Theorem]{Lemma}
\newtheorem{Definition}[Theorem]{Definition}
\newtheorem{Proposition}[Theorem]{Proposition}
\newtheorem{Corollary}[Theorem]{Corollary}
\newtheorem{Remark}[Theorem]{Remark}
\begin{document}

\title{A new convolution theorem associated with the linear canonical transform}

\author{Haiye Huo\\
Department of Mathematics, Nanchang University, \\Nanchang~330031, China\\
\mbox{} \\
Email:  hyhuo@ncu.edu.cn}

\date{}
\maketitle

\textit{Abstract}.\,\,
In this paper, we first introduce a new notion of canonical convolution operator, and show that it satisfies the commutative, associative, and distributive properties, which may be quite useful in signal processing.
Moreover, it is proved that the generalized convolution theorem and generalized Young's inequality are also hold for the new canonical convolution operator associated with the LCT. Finally, we investigate the sufficient and necessary conditions for solving a class of convolution equations associated with the LCT.

\textit{Keywords.}
Convolution operator; Convolution theorem; Linear canonical transform; Young's inequality; Convolution equations

\section{Introduction}\label{sec I}
The linear canonical transform (LCT) \cite{BKO1997,Bernardo1996,MQ1971,WWL2016} is a class of linear integral transform characterized with parameter $A=(a,b,c,d)$. It is well known that Fourier transform, Fresnel transform, fractional Fourier transform \cite{Wei2016}, and scaling operations, are all special cases of the LCT by choosing specific parameters of $A$. Therefore, the LCT has recently drawn much attention as a powerful mathematical tool in the fields of signal processing, communications and optics \cite{OZK2001,QLL2013}.

So far, many classical results in the Fourier transform domain have been extended to the LCT domain, for instance, sampling theorem \cite{HS2015,SSS2015,WL2014,WL2016,XS2013,XTZ2017}, uncertainty principle \cite{HZCX2016,SHZ2016,Stern2008,Zhang2016,ZTLW2009}, convolution theorem \cite{DTW2006,PD2001,SLZ2014B,WRL2012,XQ2014}, etc. In this paper, we revisit the convolution theorem for the LCT. As we all known, the classical convolution theorem for the Fourier transform states that the Fourier transform of the convolution of two functions is equal to the pointwise product of the Fourier transforms.
That is to say, in the Fourier transform domain, the classical convolution transform is expressed as:
\begin{equation}
\mathcal{F}(f*g)(u)=\mathcal{F}f(u)\mathcal{F}g(u),
\end{equation}
where the convolution operator $*$ is defined by
\begin{equation}\label{Fourier1}
f*g(t)=\int_{-\infty}^{+\infty}f(\tau)g(t-\tau){\rm d}\tau.
\end{equation}
Unfortunately, this claim is not true for the LCT. Therefore, by defining different forms of convolution operators (called canonical convolution operators in order to distinguish from the aforementioned convolution operator associated with the Fourier transform), a variety of convolution theorems for the LCT have been derived, see for example, Pei and Ding \cite{PD2001}, Deng et al. \cite{DTW2006}, Wei et al. \cite{WRL2012,WRL2009}, Shi et al. \cite{SLZ2014B,SSZZ2012}.

Pei and Ding \cite{PD2001} introduced a canonical convolution operator $O_{conv}^A$, which is denoted as
  \begin{align}
  &O_{conv}^A(f(t),g(t))=\mathcal{L}_{A^{-1}}\big\{\mathcal{L}_{A}f(u)\mathcal{L}_{A}g(u)\big\}(t)\nonumber\\
  =&\sqrt{\frac{1}{j8\pi^3b^3}}\int_{\mathbb{R}^3}e^{j\frac{a}{2b}(v^2+u^2+\tau^2-t^2)-j\frac{u}{b}(t-\tau-v)}\nonumber\\
    &\times f(v)g(\tau){\rm d}v {\rm d}u{\rm d}\tau.\label{Pei1}
  \end{align}
Then, the convolution theorem associated with the LCT can be expressed as follows
\begin{equation}
  \mathcal{L}_{A}\{O_{conv}^A(f(t),g(t))\}(u)=\mathcal{L}_{A}f(u)\mathcal{L}_{A}g(u),
  \end{equation}
which is similar to the traditional convolution theorem in the Fourier transform domain.
However, we can see from $(\ref{Pei1})$ that it is difficult to reduce $O_{conv}^A(f(t),g(t))$ into a single integral form as in the traditional convolution operator formula (\ref{Fourier1}).

Deng et al. \cite{DTW2006} proposed another canonical convolution operator $\Theta$, which is defined by
  \begin{small}
  \begin{align}
  (f\Theta g)(t)
  =&\sqrt{\frac{1}{j2\pi b}}e^{-j\frac{a}{2b}t^2}\bigg(\Big(e^{j\frac{a}{2b}t^2}f(t)\Big)*\Big(e^{j\frac{a}{2b}t^2} g(t)\Big)\bigg)\nonumber\\
  =&\int_{-\infty}^{+\infty}f(\tau)g(t-\tau)e^{-j\frac{a}{b}\tau(t-\tau)}{\rm d}\tau.\label{DW1}
  \end{align}
  \end{small}
Thus, the convolution theorem in the LCT domain can be represented as
\begin{equation}\label{eq:Q1}
   \mathcal{L}_{A}(f\Theta g)(u)=\mathcal{L}_{A}f(u)\mathcal{L}_{A}g(u)e^{-j\frac{d}{2b}u^2}.
  \end{equation}
Later, Wei et al. \cite{WRL2012} independently investigated the convolution theorem (\ref{eq:Q1}) and obtained some extended results.

Moreover, Wei et al. \cite{WRL2009} proposed a new canonical convolution operator $\overset{A}{\Theta}$ as follows
  \begin{equation}\label{Wei1}
  \Big(f\overset{A}{\Theta} g\Big)(t)=\int_{-\infty}^{+\infty}f(\tau)g(t\theta\tau){\rm{d}}\tau.
  \end{equation}
  Here, $g(t\theta\tau)$ is the $\tau$-generalized translation of $g(t)$, which is defined by
  \begin{align}
  g(t\theta\tau)
  =&\sqrt{\frac{1}{j2\pi b}}\sqrt{\frac{1}{-jb}}e^{-j\frac{a}{2b}(t^2-\tau^2)}\nonumber\\
     &\times\frac{1}{\sqrt{2\pi}}\int_{-\infty}^{+\infty}\mathcal{L}_{A}g(u)e^{j\frac{1}{b}(t-\tau)u}{\rm{d}}u.\label{theta}
  \end{align}
  Hence, the convolution theorem associated with the LCT becomes
  \begin{equation}
   \mathcal{L}_{A}\Big(f\overset{A}{\Theta} g\Big)(u)=\mathcal{L}_{A}f(u)\mathcal{L}_{A}g(u).
  \end{equation}
  The form (\ref{Wei1}) is also quite simple with respect to that of Fourier transform.

 Furthermore, Shi et al. \cite{SSZZ2012} introduced a new canonical convolution structure for the LCT,
 and the canonical convolution operator $\Theta_M$ is denoted as
  \begin{equation}\label{Shi11}
  (f\Theta_M g)(t)=\int_{-\infty}^{+\infty}f(\tau)g(t-\tau)e^{-j\frac{a}{b}\tau(t-\frac{\tau}{2})}{\rm d}\tau.
  \end{equation}
 Therefore, the convolution theorem has the following form
 \begin{equation}
  \mathcal{L}_{A}(f\Theta_M g)(u)=\sqrt{2\pi}\mathcal{L}_{A}f(u)\mathcal{F}g\big(\frac{u}{b}\big).
  \end{equation}
 It is shown in \cite{SSZZ2012} that the new canonical convolution operator is quite useful for signal processing.

Later, Shi et al. \cite{SLZ2014B} proposed another canonical convolution operator in the following
 \begin{align}
  &(f\Xi_{A_1,A_2,A_3} g)(t)\nonumber\\
  =&\int_{-\infty}^{+\infty}(\mathrm{T}_{\tau}^{A_1}f)(t)g(\tau)\rho_{a_1,a_2,a_3}(t,\tau){\rm d}\tau.\label{Shi22}
  \end{align}
  Here,
  \begin{equation}\label{Tau}
  (\mathrm{T}_{\tau}^{A_1}f)(t)=f(t-\tau)e^{-j\frac{a_1}{b_1}\tau(t-\frac{\tau}{2})},
  \end{equation}
  and
  \[
  \rho_{a_1,a_2,a_3}(t,\tau)=e^{j\frac{a_2}{b_2}\tau^2+j\big(\frac{a_1}{2b_1}-\frac{a_3}{2b_3}\big)t^2}.
  \]
 Then, the convolution theorem for the LCT has the form
 \begin{align}
    &\mathcal{L}_{A_3}(f\Xi_{A_1,A_2,A_3} g)(u)\nonumber\\
    =&\epsilon_{d_1,d_2,d_3}(u)\mathcal{L}_{A_1}\Big(\frac{b_1}{b_3}u\Big)\mathcal{L}_{A_2}\Big(\frac{b_2}{b_3}u\Big),\label{ShiA}
 \end{align}
 where
   \[
  \epsilon_{d_1,d_2,d_3}(u)=\sqrt{\frac{j2\pi b_1 b_2}{b_3}}e^{ju^2\Big(\frac{d_3}{2b_3}-\frac{d_1b_1^2}{2b_1b_3^2}-\frac{d_2b_2^2}{2b_2b_3^2}\Big)}.
  \]
It follows from \cite{SLZ2014B} that the classical convolution theorem for the Fourier transform, the generalized convolution theorem for the fractional Fourier transform, and some existing canonical convolution theorems associated with the LCT can be regarded as the special cases for (\ref{ShiA}).

The canonical convolution operators for the LCT introduced in the six papers mentioned above are very interesting, and can be applied to solving many theoretical or practical problems, since they can be considered as some extensions of classical convolution operator for the Fourier transform.
%
In this paper, our goal is to introduce a new canonical convolution operator for the LCT, and then derive a generalized version of the classical convolution theorem, and Young's inequality associated with the Fourier transform. Furthermore, we discuss the solvability of a class of convolution equations associated with the new canonical convolution operator. In Sec~\ref{sec:New}, we verify that our new defined canonical convolution operator can be performed into two different ways for implement in filter design. This fact may have some advantages over others in filter design, for example, compared with the canonical convolution operators introduced in references \cite{DTW2006,SLZ2014B,SSZZ2012,WRL2012,WRL2009}, which only have one way of convoluting, respectively. In fact, by considering the computational complexity or input conditions, we can have two options for choosing filtering, since in some cases, the first one may be perform better than the second one, or vice versa. Therefore, when applied to solving some specific problems, our canonical convolution introduced in this paper is much more flexible than the existing ones for the LCT mentioned in \cite{DTW2006,SLZ2014B,SSZZ2012,WRL2012,WRL2009}.



The rest of paper is organized as follows. In Sec~\ref{sec:Prel}, we briefly recall the definition of the LCT.
In Sec~\ref{sec:New}, we introduce a new canonical convolution operator, and prove that it satisfies the generalized convolution theorem for the LCT. In Sec~\ref{sec:Some}, we present two applications for the new canonical convolution operator. First, we derive a generalized Young's inequality for the new canonical convolution operator associated with the LCT. Second, we give some sufficient and necessary conditions for the solvability of a class of convolution equations associated with our new defined canonical convolution operator. Finally, we conclude the paper.

\section{The Linear Canonical Transform}\label{sec:Prel}

\begin{Definition}
The LCT of a signal $f(t)\in L^1(\mathbb{R})$ is defined by \cite{Stern2006B}:
\begin{small}
\begin{align}
&\mathcal{L}_{A}f(u):=\mathcal{L}_{A}\{f(t)\}(u)\nonumber\\
=&
    \begin{cases}
    \displaystyle{\sqrt{\frac{1}{j2\pi b}}\int_{-\infty}^{+\infty}f(t)e^{j\frac{a}{2b}t^{2}-j\frac{1}{b}ut+j\frac{d}{2b}u^{2}}{\rm{d}}t}, & b\ne0,\\
     \sqrt{d}e^{j\frac{cd}{2}u^{2}}f(du), & b=0.\\
     \end{cases}\label{def:LCT:L0}
\end{align}
\end{small}
where $A=(a,b,c,d)$,
and parameters $a,\;b,\;c,\;d\in \mathbb{R}$ satisfy $ad-bc=1$.
\end{Definition}

For $b=0$, the LCT becomes a Chirp multiplication operator. Hence, without loss of generality, we assume that $b\neq0$ in the rest of the paper. As aforementioned, the LCT includes many linear integral transforms as special cases. For instance, let $A=(0,1,-1,0),$ then the LCT (\ref{def:LCT:L0}) reduces to the Fourier transform; let $A=(\cos\alpha,\sin\alpha,-\sin\alpha,\cos\alpha),$ then the LCT (\ref{def:LCT:L0}) becomes to the fractional Fourier transform.

\section{A New Generalized Convolution Theorem for the LCT}\label{sec:New}

In this section, we first introduce a new canonical convolution operator which is quite different from the existing ones. It is shown that our new canonical convolution operator is much more flexible and useful in certain cases. Then, we study the corresponding generalized convolution theorem associated with the LCT. Finally, we give several properties that the new canonical convolution operator satisfies.

First, we introduce a new notion of canonical convolution operator, which is related to the LCT parameter $A$. Our new definition is a generalized version of \cite[Definition~1]{ACTT2017A}.
\begin{Definition}\label{def:convo}
Given two functions $f,\;g\in L^1(\mathbb{R})$, the canonical convolution operator $\otimes_A$ is denoted as
\begin{eqnarray}
(f\otimes_A g)(t)&=&\sqrt{\frac{1}{j2\pi b}}\int_{-\infty}^{+\infty}f(u)g(t-u+b){}\nonumber\\
{}&&\times e^{j\frac{a}{b}u^{2}-j\frac{a}{b}ut+jat-jau}{\rm{d}}u,\label{def:convo:1}
\end{eqnarray}
where $A=(a,b,c,d)$ is defined the same as the LCT parameter.
\end{Definition}

The new canonical convolution expression (\ref{def:convo:1}) can be rewritten into two different forms, according to the classical convolution operator $*$. First, it can be represented as
\begin{align}
h(t):=&(f\otimes_A g)(t)\nonumber\\
=&(e^{j\frac{a}{2b}s^2}\cdot f(s))*(e^{jas}\cdot e^{j\frac{a}{2b}s^2}\cdot g(s+b))(t)\nonumber\\
&\times \sqrt{\frac{1}{j2\pi b}}e^{-j\frac{a}{2b}t^2}.\label{relat:1}
\end{align}
Second, it can also be described as
\begin{align}
h(t):=&(f\otimes_A g)(t)\nonumber\\
=&(e^{-jas}\cdot e^{-j\frac{a}{2b}s^2}\cdot f(s))*(e^{j\frac{a}{2b}s^2}\cdot g(s+b))(t)\nonumber\\
&\times \sqrt{\frac{1}{j2\pi b}}e^{j\frac{a}{2b}t^2+jat}.\label{relat:2}
\end{align}
Thanks to (\ref{relat:1}) and (\ref{relat:2}), we give two realizations for the new canonical convolution operator $\otimes_A$ in Fig.~\ref{Fig.1}, and Fig.~\ref{Fig.2}, respectively.

\begin{figure*}
  \includegraphics[width=0.75\textwidth]{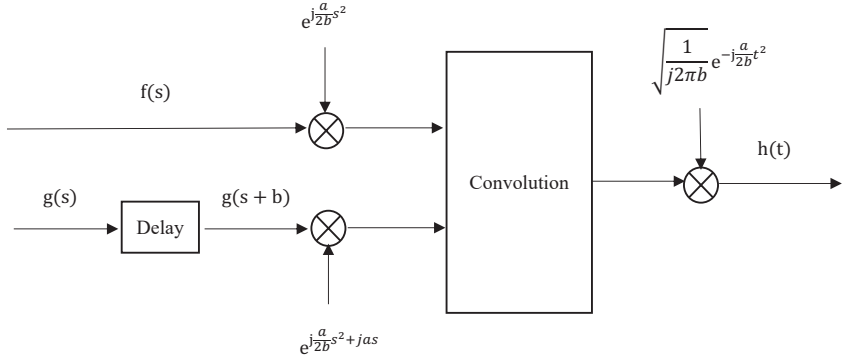}
  \vspace{-20mm}
\caption{One expression of the canonical convolution operator $\otimes_A$}
\label{Fig.1}       
\end{figure*}

\begin{figure*}
  \includegraphics[width=0.76\textwidth]{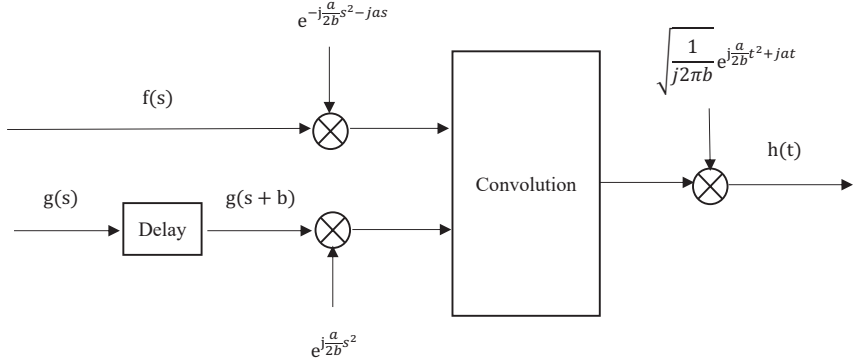}
\caption{Alternative representation of the canonical convolution operator $\otimes_A$}
\label{Fig.2}       
\end{figure*}

Compared to the canonical convolution operator defined in (\ref{Pei1}), our new canonical convolution is a single integral, which is much simpler than the triple integral mentioned in (\ref{Pei1}). Furthermore, the definition of canonical convolution operator in (\ref{Shi22}) is so complicated that it is not quite useful in filter design. Although the form of the new defined canonical convolution operator (\ref{def:convo:1}) is similar to those of (\ref{DW1}), (\ref{Wei1}) and (\ref{Shi11}), it follows from (\ref{relat:1}) and (\ref{relat:2}) that the new canonical convolution operator $\otimes_A$ has two realizations in filer design. Therefore, in certain cases, our new canonical convolution operator is much more flexible and useful than those in \cite{DTW2006,PD2001,SLZ2014B,SSZZ2012,WRL2012,WRL2009}.

Based on the new canonical convolution operator $\otimes_A$, we derive the generalized convolution theorem associated with the LCT as follows.
\begin{Theorem}\label{Thm:A}
Let $f,\;g\in L^1(\mathbb{R})$,\;$\Phi(u):=e^{ju-j\frac{d}{2b}u^2-j\frac{ab}{2}}$. Then, we have
\begin{equation}\label{ThmA:1}
\|f\otimes_A g\|_1\le\sqrt{\frac{1}{2\pi|b|}}\|f\|_1\|g\|_1,
\end{equation}
and
\begin{equation}\label{ThmA:2}
\mathcal{L}_{A}(f\otimes_A g)(u)=\Phi(u)\mathcal{L}_{A}f(u)\mathcal{L}_{A}g(u).
\end{equation}
\end{Theorem}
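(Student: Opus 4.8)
The plan is to treat the two assertions separately, since they are structurally independent.

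For the $L^1$ estimate (\ref{ThmA:1}) the argument is direct. The exponential kernel appearing in (\ref{def:convo:1}) has unit modulus and $|\sqrt{1/(j2\pi b)}|=\sqrt{1/(2\pi|b|)}$, so I would first bound
$|(f\otimes_A g)(t)|\le \sqrt{1/(2\pi|b|)}\int_{-\infty}^{+\infty}|f(u)|\,|g(t-u+b)|\,\mathrm{d}u$ pointwise in $t$. Integrating in $t$ and applying Tonelli's theorem (legitimate since $f,g\in L^1(\mathbb{R})$), I would interchange the order of integration and then substitute $s=t-u+b$ in the inner integral. This detaches the $t$-integral as $\|g\|_1$ and leaves $\|f\|_1$ outside, yielding the claimed bound. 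As a byproduct this shows $f\otimes_A g\in L^1(\mathbb{R})$, so its LCT in (\ref{ThmA:2}) is well defined.

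For the convolution theorem (\ref{ThmA:2}) I would substitute the definition (\ref{def:convo:1}) into the LCT formula (\ref{def:LCT:L0}), obtaining a double integral in the convolution variable (call it $v$) and the outer variable $t$, carrying a combined chirp exponent and an overall constant $1/(j2\pi b)$. The decisive maneuver is the change of variables $w=t-v+b$ (equivalently $t=w+v-b$), the argument of $g$, which uncouples the two functions. After expanding $t^2=(w+v-b)^2$ and collecting the exponent, the crucial cancellations are that the cross term in $vw$ vanishes and the linear terms in $v$ and in $w$ each cancel, leaving precisely the two LCT kernel phases $j\frac{a}{2b}v^2-j\frac{1}{b}uv$ and $j\frac{a}{2b}w^2-j\frac{1}{b}uw$ together with a residual phase depending only on $u$ and constants. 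The integral then factors as a product of a $v$-integral and a $w$-integral.

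To finish, I would recognize each factor as $\mathcal{L}_A f(u)$ and $\mathcal{L}_A g(u)$ up to the missing kernel factor $e^{j\frac{d}{2b}u^2}$ and the normalizing constant $\sqrt{1/(j2\pi b)}$; solving for the bare integrals reintroduces two factors of $\sqrt{j2\pi b}$ and $e^{-j\frac{d}{2b}u^2}$. The two factors $\sqrt{j2\pi b}$ cancel the leading $1/(j2\pi b)$, and combining the residual $u^2$-phases via $j\frac{d}{2b}u^2-j\frac{d}{b}u^2=-j\frac{d}{2b}u^2$ with the surviving linear term $ju$ and the constant $-j\frac{ab}{2}$ assembles exactly $\Phi(u)=e^{ju-j\frac{d}{2b}u^2-j\frac{ab}{2}}$.

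The main obstacle is purely the bookkeeping in the exponent after the substitution $t=w+v-b$: one must verify the three cancellations (the $vw$ cross term and the linear $v$- and $w$-terms) and then correctly combine the leftover $u$-dependent and constant phases. Nothing here is conceptually hard, but the sign and coefficient tracking is where an error would most easily slip in. Indeed, the structure of the defining exponent in (\ref{def:convo:1}), in particular the terms $+jat-jau$ and the shift $+b$ in the argument of $g$, is evidently engineered precisely to force these cancellations and to produce the clean phase factor $\Phi$.
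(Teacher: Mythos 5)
Your proposal is correct and takes essentially the same approach as the paper: part (\ref{ThmA:1}) is proved identically (pointwise bound on the unit-modulus kernel, Tonelli, substitution $s=t-u+b$), and part (\ref{ThmA:2}) is the paper's computation run in reverse --- the paper starts from $\Phi(u)\mathcal{L}_A f(u)\mathcal{L}_A g(u)$, expands the two LCT integrals, and substitutes $v=s-t+b$ to reassemble the convolution, whereas you expand $\mathcal{L}_A(f\otimes_A g)(u)$ and decouple via the inverse substitution $w=t-v+b$. The cancellations you flag (the $vw$ cross term, the linear $v$- and $w$-terms, the residual phase $j\frac{d}{2b}u^2-j\frac{d}{b}u^2=-j\frac{d}{2b}u^2$ together with $ju$ and $-j\frac{ab}{2}$) all check out, so the bookkeeping you identify as the only risk is in fact sound.
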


\begin{proof}
First, we prove (\ref{ThmA:1}).
Let $s=t-u+b$. By the definition of canonical convolution operator $\otimes_A$ (\ref{def:convo:1}), we have

\begin{eqnarray}
\|f\otimes_A g\|_1
&=&\int_{-\infty}^{+\infty}|(f\otimes_A g)(t)|{\rm{d}}t\nonumber\\
&\le&\sqrt{\frac{1}{2\pi|b|}}\int_{-\infty}^{+\infty}\int_{-\infty}^{+\infty}|f(u)g(t-u+b)|{\rm{d}}u{\rm{d}}t\nonumber\\
&=&\sqrt{\frac{1}{2\pi|b|}}\int_{-\infty}^{+\infty}|f(u)|{\rm{d}}u\int_{-\infty}^{+\infty}|g(s)|{\rm{d}}s\nonumber\\
&=&\sqrt{\frac{1}{2\pi|b|}}\|f\|_1\|g\|_1.\label{ThmA:4}
\end{eqnarray}
Next, we prove (\ref{ThmA:2}). By using the definition of the LCT, making change of variable $v=s-t+b$, and then utilizing the definition of canonical convolution operator (\ref{def:convo:1}), we have

\begin{small}
\begin{align}
&\Phi(u)\mathcal{L}_{A}f(u)\mathcal{L}_{A}g(u)\nonumber\\
=&e^{ju-j\frac{d}{2b}u^2-j\frac{ab}{2}}\displaystyle{\frac{1}{j2\pi b}\int_{-\infty}^{+\infty}f(t)e^{j\frac{a}{2b}t^{2}-j\frac{1}{b}ut+j\frac{d}{2b}u^{2}}{\rm{d}}t}\nonumber\\
    &\quad\times\displaystyle{\int_{-\infty}^{+\infty}g(v)e^{j\frac{a}{2b}v^{2}-j\frac{1}{b}uv+j\frac{d}{2b}u^{2}}{\rm{d}}v}\nonumber\\
=&e^{ju-j\frac{d}{2b}u^2-j\frac{ab}{2}}\frac{1}{j2\pi b}\int_{-\infty}^{+\infty}\int_{-\infty}^{+\infty}f(t)g(v)\nonumber\\
&\quad \times e^{j\frac{a}{2b}(t^{2}+v^2)-j\frac{1}{b}u(t+v)+j\frac{d}{b}u^{2}}{\rm{d}}t{\rm{d}}v\nonumber\\
=&\frac{1}{j2\pi b}e^{-j\frac{ab}{2}}\int_{-\infty}^{+\infty}\int_{-\infty}^{+\infty}f(t)g(v)\nonumber\\
    &\quad \times e^{j\frac{a}{2b}(t^{2}+v^2)-j\frac{1}{b}u(t+v-b)+j\frac{d}{2b}u^{2}}{\rm{d}}t{\rm{d}}v\nonumber\\
=&\frac{1}{j2\pi b}e^{-j\frac{ab}{2}}\int_{-\infty}^{+\infty}\int_{-\infty}^{+\infty}f(t)g(s-t+b)\nonumber\\
    &\quad \times e^{j\frac{a}{2b}[t^{2}+(s-t+b)^2]-j\frac{1}{b}us+j\frac{d}{2b}u^{2}}{\rm{d}}s{\rm{d}}t\nonumber\\
=&\frac{1}{j2\pi b}e^{-j\frac{ab}{2}}\int_{-\infty}^{+\infty}\int_{-\infty}^{+\infty}f(t)g(s-t+b)\nonumber\\
   &\quad\times e^{j\frac{a}{2b}[2t^{2}-2st+s^2+2bs-2bt+b^2]-j\frac{1}{b}us+j\frac{d}{2b}u^{2}}{\rm{d}}s{\rm{d}}t\nonumber\\
=&\sqrt{\frac{1}{j2\pi b}}\int_{-\infty}^{+\infty} e^{j\frac{a}{2b}s^{2}-j\frac{1}{b}us+j\frac{d}{2b}u^{2}}\nonumber\\
   &\;\times \left\{\sqrt{\frac{1}{j2\pi b}}\int_{-\infty}^{+\infty}f(t)g(s-t+b)
        e^{j\frac{a}{b}t^{2}-j\frac{a}{b}st+jas-jat}{\rm{d}}t\right\}{\rm{d}}s\nonumber\\
=&\sqrt{\frac{1}{j2\pi b}}\int_{-\infty}^{+\infty} e^{j\frac{a}{2b}s^{2}-j\frac{1}{b}us+j\frac{d}{2b}u^{2}}(f\otimes_A g)(s){\rm{d}}s\nonumber\\
=&\mathcal{L}_{A}(f\otimes_A g)(u).\label{ThmA:5}
\end{align}
\end{small}
This completes the proof.
\end{proof}

%
%

After simple computation, it follows from Theorem~\ref{Thm:A} that the canonical convolution operator $\otimes_A$ satisfies three properties: Commutative property, associative property, and distributive property. More clearly, the following three equalities hold for any $f,\;g,\;h\in L^1(\mathbb{R})$:
\begin{enumerate}
  \item[(1)] Commutativity: $f\otimes_A g=g\otimes_A f.$
  \item[(2)] Associativity: $(f\otimes_A g)\otimes_A h=f\otimes_A(g\otimes_A h)$.
  \item[(3)] Distributivity: $f\otimes_A(g+h)=f\otimes_A g+f\otimes_A h$.
\end{enumerate}

\section{Two Applications for the New Canonical Convolution Operator}\label{sec:Some}

\subsection{Generalized Young's Inequality}\label{Sec:Generalized}

In this subsection, we investigate the generalized Young's inequality for the new canonical convolution operator $\otimes_A$. First, let us recall the classical Young's inequality as follows.

\begin{Proposition}[\cite{Grochenig2001}]\label{Prop:A}
Let $f\in L^p(\mathbb{R}),\; g\in L^q(\mathbb{R})$,\; $\frac{1}{p}+\frac{1}{q}=1+\frac{1}{r}$,\; $\frac{1}{r}+\frac{1}{r^{\prime}}=1$. Then,
\begin{equation}\label{Prop:A1}
\|f*g\|_r\le A_pA_qA_{r^{\prime}}\|f\|_p\|g\|_q,
\end{equation}
where
\begin{equation}\label{Prop:A2}
A_p=\Big({\frac{p^{1/p}}{{p^{\prime}}^{1/{p^{\prime}}}}}\Big)^{1/2},
\end{equation}
and
$1/p+1/p^{\prime}=1$.
\end{Proposition}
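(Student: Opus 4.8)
The plan is to separate Proposition~\ref{Prop:A} into a soft part---that $f*g$ is defined almost everywhere and lies in $L^r(\mathbb{R})$ with the harmless constant $1$---and a sharp part, namely that the constant may be reduced to $A_pA_qA_{r'}$, a quantity that is never larger than $1$. The soft bound I would obtain by one application of the generalized H\"older inequality followed by Fubini's theorem; the sharp constant is the genuine obstacle and rests on singling out Gaussians as the extremizers, which I would ultimately quote from \cite{Grochenig2001}.

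For the soft bound, I start from the pointwise estimate $|(f*g)(x)|\le\int_{-\infty}^{+\infty}|f(y)||g(x-y)|\,{\rm d}y$ and factor the integrand to match the three exponents. Writing
\[
|f(y)||g(x-y)|=\Big(|f(y)|^{1-p/r}\Big)\Big(|g(x-y)|^{1-q/r}\Big)\Big(|f(y)|^{p/r}|g(x-y)|^{q/r}\Big),
\]
the relation $\tfrac1p+\tfrac1q=1+\tfrac1r$ is exactly what makes the exponents $u=\tfrac{pr}{r-p}$, $v=\tfrac{qr}{r-q}$ and $r$ conjugate, i.e. $\tfrac1u+\tfrac1v+\tfrac1r=1$. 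H\"older's inequality in three factors then gives
\[
|(f*g)(x)|\le\|f\|_p^{1-p/r}\|g\|_q^{1-q/r}\Big(\int_{-\infty}^{+\infty}|f(y)|^p|g(x-y)|^q\,{\rm d}y\Big)^{1/r}.
\]
Raising to the power $r$, integrating in $x$, and evaluating the remaining double integral by Fubini as $\|f\|_p^p\|g\|_q^q$ yields $\|f*g\|_r^r\le\|f\|_p^r\|g\|_q^r$, hence $\|f*g\|_r\le\|f\|_p\|g\|_q$. The limiting cases $r=p$, $r=q$ or $r=\infty$, where one of $u,v$ is infinite, are covered by the usual convention that the corresponding factor is identically $1$.

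The sharp constant is the main difficulty, and it cannot be read off from the H\"older argument above, which is insensitive to the shape of near-extremizers. The route I would take is the classical one: the best constant is multiplicative under tensor products, so it suffices to compute it in dimension one, and a scaling-and-rotation computation on centred Gaussians shows that they saturate the bound with value precisely $A_pA_qA_{r'}$. Turning this heuristic into a proof---that Gaussians really are the optimizers---requires either Beckner's two-point symmetrization together with the central limit theorem or the rearrangement and heat-flow argument of Brascamp and Lieb; since Proposition~\ref{Prop:A} is stated as a quoted result, I would cite \cite{Grochenig2001} for this step rather than reproduce the extremizer analysis, and in the sequel invoke only whichever form---the elementary constant $1$ or the sharp constant---the later generalized Young's inequality for $\otimes_A$ actually requires.
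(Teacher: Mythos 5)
The paper offers no proof of Proposition~\ref{Prop:A}: it is stated as a result quoted from \cite{Grochenig2001}, so there is no internal argument to compare yours against, and citing the reference for the sharp constant is exactly what the paper itself does. Your proposal is correct and in fact does strictly more than the paper. The soft part is a complete, valid proof of Young's inequality with constant $1$: the exponents $u=\tfrac{pr}{r-p}$, $v=\tfrac{qr}{r-q}$, $r$ are conjugate precisely because $\tfrac1p+\tfrac1q=1+\tfrac1r$, the three-factor H\"older step gives the stated pointwise bound with outer factors $\|f\|_p^{1-p/r}\|g\|_q^{1-q/r}$, and Tonelli (all integrands nonnegative) evaluates the remaining double integral as $\|f\|_p^p\|g\|_q^q$, yielding $\|f*g\|_r\le\|f\|_p\|g\|_q$. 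Your assessment of the sharp part is also accurate: no variant of the H\"older argument produces Beckner's constant $A_pA_qA_{r^{\prime}}\le 1$; that genuinely requires the Gaussian-extremizer analysis of Beckner or Brascamp--Lieb, and quoting it is the right call. Two minor points. First, your description of the degenerate cases is slightly off: when $r=\infty$ it is the third H\"older exponent (the one equal to $r$), not $u$ or $v$, that becomes infinite, and it is the factor $\big(|f(y)|^p|g(x-y)|^q\big)^{1/r}$ that degenerates to $1$; the convention still works, but the bookkeeping is as just stated. Second, your closing suggestion to invoke ``whichever form the later inequality requires'' should be resolved in favor of the sharp form: the paper's Theorem~\ref{Thm:C} carries the constant $A_pA_qA_{r^{\prime}}$ into its conclusion, so the quoted sharp inequality is what is actually needed there; your constant-$1$ version would only deliver the weaker (though still correct) bound $\sqrt{\tfrac{1}{2\pi|b|}}\,\|f\|_p\|g\|_q$.
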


Next, we show that our new canonical convolution operator $\otimes_A$ also satisfies the Young's inequality.

\begin{Theorem}\label{Thm:C}
Let $f\in L^p(\mathbb{R}),\; g\in L^q(\mathbb{R})$,\; $\frac{1}{p}+\frac{1}{q}=1+\frac{1}{r}$,\;$\frac{1}{r}+\frac{1}{r^{\prime}}=1$. Then,
\begin{equation}\label{ThmC:1}
\|f\otimes_A g\|_r\le \sqrt{\frac{1}{2\pi |b|}}A_p A_q A_{r^{\prime}}
\|f\|_p\|g\|_q,
\end{equation}
where $A_p$ is defined the same as in (\ref{Prop:A2}).
\end{Theorem}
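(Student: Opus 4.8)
The plan is to reduce the estimate to the classical Young's inequality (Proposition~\ref{Prop:A}) by passing to absolute values and exploiting the fact that the entire phase factor in the integrand of (\ref{def:convo:1}) has modulus one. First I would take the pointwise modulus inside the definition (\ref{def:convo:1}): since $\big|e^{j\frac{a}{b}u^{2}-j\frac{a}{b}ut+jat-jau}\big|=1$ and $\big|\sqrt{1/(j2\pi b)}\big|=\sqrt{1/(2\pi|b|)}$, the triangle inequality yields the pointwise bound
\begin{equation}
|(f\otimes_A g)(t)|\le\sqrt{\frac{1}{2\pi|b|}}\int_{-\infty}^{+\infty}|f(u)|\,|g(t-u+b)|\,{\rm d}u.\nonumber
\end{equation}
The crucial observation is that the right-hand side is, up to the constant, an ordinary convolution of absolute values. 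Introducing the shifted function $\widetilde{g}(s):=g(s+b)$ so that $g(t-u+b)=\widetilde{g}(t-u)$, the integral becomes exactly $\big(|f|*|\widetilde{g}|\big)(t)$, and hence $|(f\otimes_A g)(t)|\le\sqrt{1/(2\pi|b|)}\,\big(|f|*|\widetilde{g}|\big)(t)$.

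Next I would take the $L^r$-norm of both sides and apply the classical Young's inequality to $|f|*|\widetilde{g}|$. This gives
\begin{equation}
\|f\otimes_A g\|_r\le\sqrt{\frac{1}{2\pi|b|}}\,\big\||f|*|\widetilde{g}|\big\|_r
\le\sqrt{\frac{1}{2\pi|b|}}\,A_pA_qA_{r'}\,\big\||f|\big\|_p\,\big\||\widetilde{g}|\big\|_q,\nonumber
\end{equation}
where the second inequality is (\ref{Prop:A1}) with the exponents $p,q,r,r'$ satisfying the stated relations. The final ingredient is the translation invariance of the $L^q$-norm: since $\widetilde{g}$ is merely $g$ shifted by $b$, the substitution $w=s+b$ shows $\|\widetilde{g}\|_q=\|g\|_q$, and of course $\big\||f|\big\|_p=\|f\|_p$. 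Combining these gives precisely (\ref{ThmC:1}).

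There is no serious analytic obstacle here; the proof is essentially a reduction. The only points requiring care are bookkeeping ones: verifying that the phase factor is genuinely unimodular (so that nothing of the parameter $A$ survives except the amplitude $\sqrt{1/(2\pi|b|)}$), and recognizing that the shift $t-u+b$ rather than $t-u$ in the kernel produces only a translated copy $\widetilde{g}$ whose $L^q$-norm agrees with that of $g$. One should also confirm that the hypotheses $f\in L^p$, $g\in L^q$ with $\tfrac1p+\tfrac1q=1+\tfrac1r$ are exactly those needed to invoke Proposition~\ref{Prop:A} for the convolution $|f|*|\widetilde{g}|$, which they are since $|f|\in L^p$ and $|\widetilde{g}|\in L^q$. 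The special case $r=1$ recovers the bound (\ref{ThmA:1}) already established in Theorem~\ref{Thm:A}, providing a useful consistency check on the constant.
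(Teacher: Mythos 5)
Your proposal is correct and takes essentially the same route as the paper: both reduce (\ref{ThmC:1}) to the classical Young's inequality of Proposition~\ref{Prop:A} by recognizing $f\otimes_A g$ as an ordinary convolution of a chirped copy of $f$ with a chirped, shifted copy of $g$, where every extra factor is unimodular and the shift leaves the $L^q$-norm unchanged. The only cosmetic difference is that the paper works from the exact factorization (\ref{relat:1}) and obtains the norm identity $\|f\otimes_A g\|_r=\sqrt{1/(2\pi|b|)}\,\|\tilde{f}*\tilde{g}\|_r$ before invoking Young's inequality for the (complex-valued) chirped functions, whereas you pass to absolute values pointwise first; the two devices are interchangeable here.
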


\begin{proof}
By (\ref{relat:1}), we obtain
\begin{small}
\begin{align}
&\|f\otimes_A g\|_r\nonumber\\
=&\bigg(\int_{-\infty}^{+\infty}\Big|(e^{j\frac{a}{2b}s^2}\cdot f(s))*(e^{jas}\cdot e^{j\frac{a}{2b}s^2}\cdot g(s+b))(t)\nonumber\\
&\times \sqrt{\frac{1}{j2\pi b}}e^{-j\frac{a}{2b}t^2}\Big|^r{\rm{d}}t\bigg)^{\frac{1}{r}}\nonumber\\
=& \sqrt{\frac{1}{2\pi |b|}}\bigg(\int_{-\infty}^{+\infty}\bigg|(e^{j\frac{a}{2b}s^2}f(s))*(e^{jas+j\frac{a}{2b}s^2} g(s+b))(t)\bigg|^r{\rm{d}}t\bigg)^{\frac{1}{r}}\nonumber\\
=&\sqrt{\frac{1}{2\pi |b|}}\left\|(e^{j\frac{a}{2b}(\cdot)^2} f(\cdot))*(e^{ja(\cdot)+j\frac{a}{2b}(\cdot)^2} g(\cdot+b))\right\|_r\nonumber\\
\triangleq&\sqrt{\frac{1}{2\pi |b|}}\|\tilde{f}* \tilde{g}\|_r,\label{ThmC:2}
\end{align}
\end{small}
where $\tilde{f}(\cdot)\triangleq e^{j\frac{a}{2b}(\cdot)^2} f(\cdot)$,\; $\tilde{g}(\cdot)\triangleq e^{ja(\cdot)+j\frac{a}{2b}(\cdot)^2} g(\cdot+b)$.
Note that $\tilde{f}\in L^p(\mathbb{R}),\; \tilde{g}\in L^q(\mathbb{R})$. Applying the classical Young's inequality (\ref{Prop:A1}) for the functions $\tilde{f}$ and $\tilde{g}$, we have
\begin{equation}\label{ThmC:3}
\|\tilde{f}*\tilde{g}\|_r\le A_pA_qA_{r^{\prime}}\|\tilde{f}\|_p\|\tilde{g}\|_q,
\end{equation}
Note that $\|\tilde{f}\|_p=\|f\|_p,\;\|\tilde{g}\|_q=\|g\|_q$.
Substituting (\ref{ThmC:3}) into (\ref{ThmC:2}), we get
\begin{eqnarray}
\|f\otimes_A g\|_r
&=&\sqrt{\frac{1}{2\pi |b|}}\|\tilde{f}* \tilde{g}\|_r\nonumber\\
&\le&\sqrt{\frac{1}{2\pi |b|}}A_p A_q A_{r^{\prime}}\|\tilde{f}\|_p\|\tilde{g}\|_q\nonumber\\
&=&\sqrt{\frac{1}{2\pi |b|}}A_p A_q A_{r^{\prime}}\|f\|_p\|g\|_q,
\end{eqnarray}
which completes the proof.
\end{proof}

\subsection{Solvability for One Class of Convolution Equations}\label{Sec:Equations}
In this subsection, we mainly discuss the solution for a class of convolution equations associated with the canonical convolution operator $\otimes_A$.
Assume that $\lambda\in \mathbb{C}$, and $f,\;g\in L^1(\mathbb{R})$ are given, $\phi$ is unknown, consider the following canonical convolution equation:
\begin{equation}\label{Eq:Cov1}
\lambda \phi(t)+(g\otimes_A \phi)(t)=f(t).
\end{equation}
In the sequel, we will determine the value of $\phi$ .

Before presenting our main result, we give a lemma, which is very important for proving our theorem.

\begin{Lemma}\label{Eq:L1}
Let $\Lambda(u):=\lambda+\mathcal{L}_{A}g(u)\Phi(u)$, then the following two statements hold:
\begin{enumerate}
  \item[$(a)$]If $\lambda\ne 0$, then there exists a constant C, such that $\Lambda(u)\neq 0$ for every $|u|> C$.
  \item[$(b)$] If for all $u\in \mathbb{R}$, $\Lambda(u)\neq 0$, then $\frac{1}{\Lambda(u)}$ is continuous and bounded on $\mathbb{R}$.
\end{enumerate}
\end{Lemma}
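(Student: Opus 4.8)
The plan is to reduce both parts to a single analytic fact about $\mathcal{L}_{A}g$: that it is continuous on $\mathbb{R}$ and vanishes at infinity. To establish this, I would first factor the LCT kernel, writing
\[
\mathcal{L}_{A}g(u)=\sqrt{\tfrac{1}{j2\pi b}}\,e^{j\frac{d}{2b}u^{2}}\int_{-\infty}^{+\infty}G(t)\,e^{-j\frac{1}{b}ut}\,{\rm d}t,\qquad G(t):=g(t)\,e^{j\frac{a}{2b}t^{2}}.
\]
Since $|G(t)|=|g(t)|$ we have $G\in L^{1}(\mathbb{R})$, so the remaining integral is, up to the dilation $u\mapsto u/b$, the Fourier transform of an $L^{1}$ function. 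Two classical facts then apply: the Fourier transform of an $L^{1}$ function is continuous (by dominated convergence) and tends to $0$ at infinity (Riemann--Lebesgue lemma). Because the prefactor $\sqrt{1/(j2\pi b)}\,e^{j\frac{d}{2b}u^{2}}$ has constant modulus $\sqrt{1/(2\pi|b|)}$ and $|\Phi(u)|=1$ for every $u$ (its exponent is purely imaginary), it follows that $u\mapsto\mathcal{L}_{A}g(u)\Phi(u)$ is continuous on $\mathbb{R}$ and satisfies $|\mathcal{L}_{A}g(u)\Phi(u)|\to 0$ as $|u|\to\infty$.

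Statement $(a)$ is then immediate. Given $\lambda\neq 0$, the decay just established provides a constant $C$ with $|\mathcal{L}_{A}g(u)\Phi(u)|<|\lambda|$ whenever $|u|>C$. The reverse triangle inequality yields $|\Lambda(u)|\ge|\lambda|-|\mathcal{L}_{A}g(u)\Phi(u)|>0$ on that range, so $\Lambda(u)\neq 0$ for $|u|>C$.

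For statement $(b)$, continuity of $1/\Lambda$ follows at once: $\Lambda$ is continuous, being the sum of the constant $\lambda$ and the continuous map $\mathcal{L}_{A}g\cdot\Phi$, and by hypothesis it never vanishes, so its reciprocal is continuous. Boundedness is the point requiring care, and I would split $\mathbb{R}$ into a compact core and its complement. On the tail, since $\Lambda(u)\to\lambda$ as $|u|\to\infty$, one fixes $C$ (as in part $(a)$) so that $|\Lambda(u)|\ge|\lambda|/2$ for $|u|>C$; on the compact interval $[-C,C]$ the continuous, non-vanishing function $|\Lambda|$ attains a strictly positive minimum $m>0$ by the extreme value theorem. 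Taking $\delta=\min(|\lambda|/2,m)>0$ gives $|\Lambda(u)|\ge\delta$ for all $u$, hence $|1/\Lambda(u)|\le 1/\delta$.

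The main obstacle, and the subtlety worth flagging, is that boundedness in $(b)$ genuinely relies on $\lambda\neq 0$: if $\lambda=0$ then $\Lambda=\mathcal{L}_{A}g\cdot\Phi$ decays to $0$ at infinity, so even when it is nowhere zero its reciprocal blows up. Thus the tail estimate, not the continuity, is the crux, and it must borrow the hypothesis $\lambda\neq 0$, consistent with the role that assumption plays throughout the convolution-equation analysis. Everything else is a routine combination of Riemann--Lebesgue, the reverse triangle inequality, and compactness.
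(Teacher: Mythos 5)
Your proof is correct and is essentially the argument the paper relies on: the paper itself omits the proof, deferring to \cite[Proposition~7]{ACTT2017B} and \cite[Proposition~1]{ACTT2017A}, and those proofs run exactly along your lines --- factor the LCT kernel so that $\mathcal{L}_{A}g$ is a unimodular chirp times the Fourier transform of the $L^1$ function $g(t)e^{j\frac{a}{2b}t^2}$, then apply the Riemann--Lebesgue lemma, the reverse triangle inequality on the tail, and compactness of $[-C,C]$ for the core. Your closing caveat is also accurate and worth recording: part $(b)$ as literally stated fails when $\lambda=0$ (e.g.\ for a Gaussian $g$ the function $\Lambda=\Phi\,\mathcal{L}_{A}g$ is nowhere zero yet decays at infinity, so $1/\Lambda$ is continuous but unbounded), and the paper in fact only invokes the boundedness of $1/\Lambda$ in case (1) of Theorem~\ref{Thm:B}, where $\lambda\neq0$ --- precisely the hypothesis your tail estimate requires.
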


The proof of Lemma~\ref{Eq:L1} is similar to those of \cite[Proposition~7]{ACTT2017B} and \cite[Proposition~1]{ACTT2017A}, hence, we omit the proof.

\begin{Theorem}\label{Thm:B}
Let $\Lambda(u)\neq 0$ for all $u\in \mathbb{R}$. Suppose that one of the following two conditions holds:
\begin{enumerate}\label{ThmB:B1}
  \item [(1)] $\lambda\ne0$, and $\mathcal{L}_{A}f\in L^1(\mathbb{R})$;
  \item [(2)] $\lambda=0$, and $\frac{\mathcal{L}_{A}f}{\mathcal{L}_{A}g}\in L^1(\mathbb{R})$.
\end{enumerate}
Then equation (\ref{Eq:Cov1}) has a solution in $L^1(\mathbb{R})$ if and only if $\mathcal{L}_{A^{-1}}\Big(\frac{\mathcal{L}_{A}f}{\Lambda}\Big)\in L^1(\mathbb{R})$. Furthermore, the solution has the form of $\phi=\mathcal{L}_{A^{-1}}\Big(\frac{\mathcal{L}_{A}f}{\Lambda}\Big)$.
\end{Theorem}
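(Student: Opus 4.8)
The plan is to convert the convolution equation (\ref{Eq:Cov1}) into a purely multiplicative equation in the LCT domain, solve it algebraically, and then invert. First I would apply $\mathcal{L}_{A}$ to both sides of (\ref{Eq:Cov1}). Using linearity of the LCT together with the generalized convolution theorem (\ref{ThmA:2}), namely $\mathcal{L}_{A}(g\otimes_A\phi)(u)=\Phi(u)\mathcal{L}_{A}g(u)\mathcal{L}_{A}\phi(u)$, the equation becomes
\begin{equation*}
\big(\lambda+\Phi(u)\mathcal{L}_{A}g(u)\big)\mathcal{L}_{A}\phi(u)=\mathcal{L}_{A}f(u),
\end{equation*}
that is, $\Lambda(u)\mathcal{L}_{A}\phi(u)=\mathcal{L}_{A}f(u)$. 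Since by hypothesis $\Lambda(u)\neq0$ for every $u\in\mathbb{R}$, dividing yields $\mathcal{L}_{A}\phi(u)=\mathcal{L}_{A}f(u)/\Lambda(u)$, and applying $\mathcal{L}_{A^{-1}}$ produces the candidate $\phi=\mathcal{L}_{A^{-1}}\big(\mathcal{L}_{A}f/\Lambda\big)$. This already settles the necessity direction and the form of the solution: any $L^1$ solution must, after transforming and inverting, coincide with $\mathcal{L}_{A^{-1}}(\mathcal{L}_{A}f/\Lambda)$, which therefore lies in $L^1(\mathbb{R})$.

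Next I would verify that $\mathcal{L}_{A}f/\Lambda$ is itself in $L^1(\mathbb{R})$, so that $\mathcal{L}_{A^{-1}}(\mathcal{L}_{A}f/\Lambda)$ is well defined through the integral formula in (\ref{def:LCT:L0}); this is exactly where the two cases of the hypothesis intervene. In case $(1)$, $\lambda\neq0$ forces $\Lambda(u)\neq0$ for $|u|>C$ by part $(a)$ of Lemma~\ref{Eq:L1}, whence part $(b)$ gives that $1/\Lambda$ is continuous and bounded; combined with $\mathcal{L}_{A}f\in L^1(\mathbb{R})$, this yields $\mathcal{L}_{A}f/\Lambda\in L^1(\mathbb{R})$. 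In case $(2)$, $\lambda=0$ reduces $\Lambda$ to $\mathcal{L}_{A}g\cdot\Phi$, and since the exponent of $\Phi$ is purely imaginary one has $|\Phi(u)|=1$ for all $u$, so $|\mathcal{L}_{A}f/\Lambda|=|\mathcal{L}_{A}f/\mathcal{L}_{A}g|$, and the assumption $\mathcal{L}_{A}f/\mathcal{L}_{A}g\in L^1(\mathbb{R})$ again gives $\mathcal{L}_{A}f/\Lambda\in L^1(\mathbb{R})$.

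For sufficiency I would set $\phi_0:=\mathcal{L}_{A^{-1}}(\mathcal{L}_{A}f/\Lambda)$ and, assuming $\phi_0\in L^1(\mathbb{R})$, check directly that it solves (\ref{Eq:Cov1}). Applying the LCT inversion theorem to the $L^1$ function $\mathcal{L}_{A}f/\Lambda$ gives $\mathcal{L}_{A}\phi_0=\mathcal{L}_{A}f/\Lambda$; then $g\otimes_A\phi_0\in L^1(\mathbb{R})$ by the norm bound (\ref{ThmA:1}), and the convolution theorem (\ref{ThmA:2}) yields
\begin{equation*}
\mathcal{L}_{A}(\lambda\phi_0+g\otimes_A\phi_0)(u)=\frac{\big(\lambda+\Phi(u)\mathcal{L}_{A}g(u)\big)\mathcal{L}_{A}f(u)}{\Lambda(u)}=\mathcal{L}_{A}f(u).
\end{equation*}
By injectivity of $\mathcal{L}_{A}$ on $L^1(\mathbb{R})$ this forces $\lambda\phi_0+g\otimes_A\phi_0=f$, which completes the proof.

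The main obstacle I anticipate is not the algebra but the functional-analytic bookkeeping in the $L^1$ setting: justifying the inversion identities $\mathcal{L}_{A^{-1}}\mathcal{L}_{A}=\mathrm{id}$ and $\mathcal{L}_{A}\mathcal{L}_{A^{-1}}=\mathrm{id}$ and the injectivity of $\mathcal{L}_{A}$ for the functions at hand. Because the LCT of an $L^1$ function is bounded and continuous but need not be integrable, these statements must be invoked precisely for functions whose transform is also integrable; this is exactly why the integrability of $\mathcal{L}_{A}f/\Lambda$ secured in the second step is indispensable, and it is the only delicate point beyond transcribing the Fourier-transform arguments through the chirp-multiplication relations (\ref{relat:1})--(\ref{relat:2}).
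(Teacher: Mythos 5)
Your proposal is correct and follows essentially the same route as the paper's own proof: transform the equation via the convolution theorem (\ref{ThmA:2}) into $\Lambda(u)\mathcal{L}_{A}\phi(u)=\mathcal{L}_{A}f(u)$, divide by the nonvanishing $\Lambda$, use Lemma~\ref{Eq:L1} together with $|\Phi(u)|=1$ to secure integrability of $\mathcal{L}_{A}f/\Lambda$ in both cases, and then invert and invoke injectivity of $\mathcal{L}_{A}$ for the two directions. If anything, your version is slightly more explicit than the paper's on the $L^1$ bookkeeping (e.g.\ noting $g\otimes_A\phi_0\in L^1(\mathbb{R})$ via (\ref{ThmA:1}) and pinpointing exactly where integrability of $\mathcal{L}_{A}f/\Lambda$ is needed for the inversion identities), but the substance is the same.
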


\begin{proof}
We only consider the case when the condition (1) is satisfied.
Since $\Phi(u)=e^{ju-j\frac{d}{2b}u^2-j\frac{ab}{2}}$, $|\Phi(x)|=1$, we know that $\frac{1}{\Phi}$ is continuous and bounded on $\mathbb{R}$. Hence,
$\frac{\mathcal{L}_{A}f}{\mathcal{L}_{A}g}\in L^1(\mathbb{R})$ if and only if $\frac{\mathcal{L}_{A}f}{\Phi\mathcal{L}_{A}g}\in L^1(\mathbb{R})$. Therefore, the case (2) becomes to the case (1).

Necessity: Suppose that equation (\ref{Eq:Cov1}) has a solution $\phi\in L^1(\mathbb{R})$. Multiplying the operator $\mathcal{L}_{A}$ to the both sides of the equation (\ref{Eq:Cov1}), then we have
\begin{equation}\label{ThmB:B2}
\lambda \mathcal{L}_{A}\phi(u)+\mathcal{L}_{A}(g\otimes_A \phi)(u)=\mathcal{L}_{A}f(u).
\end{equation}
By using (\ref{ThmA:2}), we obtain
\begin{equation}\label{ThmB:B2}
\lambda \mathcal{L}_{A}\phi(u)+\Phi(u)\mathcal{L}_{A}g(u) \mathcal{L}_{A}\phi(u)=\mathcal{L}_{A}f(u),
\end{equation}
i.e.,
\begin{equation}\label{ThmB:B3}
(\lambda+\Phi(u)\mathcal{L}_{A}g(u))\mathcal{L}_{A}\phi(u)=\mathcal{L}_{A}f(u).
\end{equation}
Since $\lambda\neq 0$, then
\[
\Lambda(u)=\lambda+\Phi(u)\mathcal{L}_{A}g(u)\neq 0
\]
for all $u\in \mathbb{R}$. Therefore, the equation (\ref{ThmB:B3}) becomes
\begin{equation}\label{ThmB:B4}
\mathcal{L}_{A}\phi(u)=\frac{\mathcal{L}_{A}f(u)}{\Lambda(u)}.
\end{equation}
By Lemma~\ref{Eq:L1}, we know that $\frac{1}{\Lambda(u)}$ is continuous and bounded on $\mathbb{R}$. Since $\mathcal{L}_{A}f\in L^1(\mathbb{R})$, we have $\frac{\mathcal{L}_{A}f(u)}{\Lambda(u)}\in L^1(\mathbb{R})$. Taking inverse LCT transform to the both sides of equation (\ref{ThmB:B4}), we obtain the solution
\[
\phi(t)=\mathcal{L}_{A^{-1}}\Big(\frac{\mathcal{L}_{A}f(u)}{\Lambda(u)}\Big)(t).
\]
Since $\phi\in L^1(\mathbb{R})$, then $\mathcal{L}_{A^{-1}}\Big(\frac{\mathcal{L}_{A}f}{\Lambda}\Big)\in L^1(\mathbb{R}).$

Sufficiency: Let
\[
\phi(t):=\mathcal{L}_{A^{-1}}\Big(\frac{\mathcal{L}_{A}f(u)}{\Lambda(u)}\Big)(t).
 \]
Then, we have $\phi\in L^1(\mathbb{R})$. Applying the LCT to $\phi$, we get
\[
\mathcal{L}_{A}\phi(u)=\frac{\mathcal{L}_{A}f(u)}{\Lambda(u)}.
\]
That is to say,
\[
(\lambda+\Phi(u)\mathcal{L}_{A}g(u))\mathcal{L}_{A}\phi(u)=\mathcal{L}_{A}f(u).
\]
By using (\ref{ThmA:2}) again, we obtain
\[
\mathcal{L}_{A}\left\{\lambda \phi(t)+(g\otimes_A \phi)(t)\right\}(u)=\mathcal{L}_{A}f(u).
\]
Due to the uniqueness of LCT operator $\mathcal{L}_{A}$, $\phi$ satisfies the equation (\ref{Eq:Cov1}) for almost every $t\in \mathbb{R}$, which means that equation (\ref{Eq:Cov1}) has a solution.
This completes the proof.
\end{proof}

\begin{Corollary}
 Let $A=(\cos\alpha,\sin\alpha,-\sin\alpha,\cos\alpha),$ then Theorem~\ref{Thm:B} reduces to the Theorem~3 mentioned in \cite{ACTT2017A}.
\end{Corollary}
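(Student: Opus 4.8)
The plan is to prove the corollary by direct specialization: I would substitute the parameters $a=\cos\alpha$, $b=\sin\alpha$, $c=-\sin\alpha$, $d=\cos\alpha$ (which indeed satisfy $ad-bc=\cos^2\alpha+\sin^2\alpha=1$) into every object appearing in Theorem~\ref{Thm:B} and check that each one collapses to its fractional-Fourier counterpart in \cite{ACTT2017A}. First I would recall from Section~\ref{sec:Prel} that under this choice the kernel of $\mathcal{L}_{A}$ reduces to the fractional Fourier transform of order $\alpha$; simultaneously, since $A^{-1}=(d,-b,-c,a)=(\cos\alpha,-\sin\alpha,\sin\alpha,\cos\alpha)$, the inverse operator $\mathcal{L}_{A^{-1}}$ becomes the fractional Fourier transform of order $-\alpha$, i.e. the inverse fractional transform used in \cite{ACTT2017A}.

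Next I would verify the reduction of the auxiliary quantities. Plugging the parameters into $\Phi(u)=e^{ju-j\frac{d}{2b}u^2-j\frac{ab}{2}}$ gives $\Phi(u)=e^{ju-j\frac{1}{2}\cot\alpha\,u^2-j\frac{1}{4}\sin 2\alpha}$, which I would match term-by-term against the phase factor appearing in \cite[Theorem~3]{ACTT2017A}. Because the present Definition~\ref{def:convo} is stated as a generalization of \cite[Definition~1]{ACTT2017A}, the key algebraic step is to substitute the same parameters into (\ref{def:convo:1}) and confirm that $\otimes_A$ coincides with the fractional convolution operator of \cite{ACTT2017A}; this reduces to rewriting the exponent $j\frac{a}{b}u^{2}-j\frac{a}{b}ut+jat-jau$ in the cotangent form $j\cot\alpha\,u^{2}-j\cot\alpha\,ut+j\cos\alpha\,t-j\cos\alpha\,u$ used there. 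With $\Phi$ and $\otimes_A$ identified, the quantity $\Lambda(u)=\lambda+\mathcal{L}_{A}g(u)\Phi(u)$ automatically matches the corresponding $\Lambda$ in \cite{ACTT2017A}.

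Finally, with all objects identified, I would read off that the two hypotheses of Theorem~\ref{Thm:B}---namely ($\lambda\neq0$ with $\mathcal{L}_{A}f\in L^1(\mathbb{R})$) and ($\lambda=0$ with $\mathcal{L}_{A}f/\mathcal{L}_{A}g\in L^1(\mathbb{R})$)---become exactly the two hypotheses of \cite[Theorem~3]{ACTT2017A}, and likewise the solvability criterion $\mathcal{L}_{A^{-1}}\big(\mathcal{L}_{A}f/\Lambda\big)\in L^1(\mathbb{R})$ together with the solution formula $\phi=\mathcal{L}_{A^{-1}}\big(\mathcal{L}_{A}f/\Lambda\big)$ reduce verbatim to the statement in \cite{ACTT2017A}. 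The only genuine labor is the trigonometric bookkeeping in the second step; I expect the main (and mild) obstacle to be confirming that the convolution operator and the phase factor $\Phi$ match the conventions of \cite{ACTT2017A} up to the normalization constant $\sqrt{1/(j2\pi b)}=\sqrt{1/(j2\pi\sin\alpha)}$, since differing sign or scaling conventions between the two papers could otherwise obscure the identification.
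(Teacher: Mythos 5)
Your proposal is correct and is essentially the argument the paper has in mind: the paper states this corollary without proof, treating it as an immediate specialization, and your parameter bookkeeping ($a/b=\cot\alpha$, $d/(2b)=\tfrac{1}{2}\cot\alpha$, $ab/2=\tfrac{1}{4}\sin 2\alpha$, $A^{-1}=(\cos\alpha,-\sin\alpha,\sin\alpha,\cos\alpha)$ giving the order~$-\alpha$ fractional Fourier transform) is exactly the verification that justifies it. Your closing caution about matching the normalization $\sqrt{1/(j2\pi\sin\alpha)}$ and sign conventions of \cite{ACTT2017A} is the right place to focus the only real work.
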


\begin{Remark}
Similar to the Definition~\ref{def:convo}, we can define another canonical convolution operator $\odot_A$ by
\begin{eqnarray}
\Big(f\odot_A g\Big)(t)&=&\sqrt{\frac{1}{j2\pi b}}\int_{-\infty}^{+\infty}f(u)g(t-u-b){}\nonumber\\
{}&&\times e^{j\frac{a}{b}u^{2}-j\frac{a}{b}ut-jat+jau}{\rm{d}}u.\label{def:Rem1}
\end{eqnarray}
Then, the new canonical convolution operator $\odot_A$ also has three properties: Commutative property, associative property, and distributive property. In addition, the statements in Theorem~\ref{Thm:A}, Theorem~\ref{Thm:C}, and Theorem~\ref{Thm:B} also hold for operator $\odot_A$ with some minor adjustments. Due to the similarity, we omit the proof of this claim.
\end{Remark}

\section{Conclusion}\label{Sec:Conclusion}
In this paper, we first define a new canonical convolution operator, which is much more flexible and simple than the existing ones. Then, we show that it satisfies the generalized convolution theorem and Young's inequality. Finally, we investigate the solvability of a class of convolution equations associated with the new canonical convolution operator.

\section*{Acknowledgements}
The author thanks the referees very much for carefully reading the paper and for elaborate and valuable suggestions.


\begin{thebibliography}{99}

\bibitem{ACTT2017B}
P.~K. Anh, L.~P. Castro, P.~T. Thao, and N.~M. Tuan. Inequalities and consequences of new convolutions for the fractional {Fourier} transform with {Hermite} weights. \newblock In {\em AIP Conf Proc}, volume 1798, page 020006. AIP Publishing, 2017.

\bibitem{ACTT2017A}
P.~K. Anh, L.~P. Castro, P.~T. Thao, and N.~M. Tuan. Two new convolutions for the fractional {Fourier} transform. \newblock {\em Wireless Pers Commun}, 92(2):623--637, 2017.

\bibitem{BKO1997}
B.~Barshan, M.~A. Kutay, and H.~M. Ozaktas.
\newblock Optimal filtering with linear canonical transformations.
\newblock {\em Opt Commun}, 135(1):32--36, 1997.

\bibitem{Bernardo1996}
L.~M. Bernardo.
\newblock {ABCD} matrix formalism of fractional {Fourier} optics.
\newblock {\em Opt Eng}, 35(3):732--740, 1996.

\bibitem{DTW2006}
B.~Deng, R.~Tao, and Y.~Wang.
\newblock Convolution theorems for the linear canonical transform and their
  applications.
\newblock {\em Sci China, Ser F: Inform Sci}, 49(5):592--603, 2006.

\bibitem{Grochenig2001}
K.~Gr{\"{o}}chenig.
\newblock {\em Foundations of time-frequency analysis}.
\newblock Birkh{\"{a}}user, New York, 2001.

\bibitem{HZCX2016}
L.~Huang, K.~Zhang, Y.~Chai, and S.~Xu.
\newblock Uncertainty principle and orthogonal condition for the short-time
  linear canonical transform.
\newblock {\em Signal Image Video Process}, 10(6):1177--1181, 2016.

\bibitem{HS2015}
H.~Huo and W.~Sun.
\newblock Sampling theorems and error estimates for random signals in the
  linear canonical transform domain.
\newblock {\em Signal Process}, 111:31--38, 2015.

\bibitem{MQ1971}
M.~Moshinsky and C.~Quesne.
\newblock Linear canonical transformations and their unitary representations.
\newblock {\em J Math Phys}, 12(8):1772--1780, 1971.

\bibitem{OZK2001}
H.~M. Ozaktas, Z.~Zalevsky, and M.~A. Kutay.
\newblock {\em The fractional {Fourier} transform}.
\newblock Wiley, New York, 2001.

\bibitem{PD2001}
S.-C. Pei and J.-J. Ding.
\newblock Relations between fractional operations and time-frequency
  distributions, and their applications.
\newblock {\em IEEE Trans Signal Process}, 49(8):1638--1655, 2001.

\bibitem{QLL2013}
W.~Qiu, B.-Z. Li, and X.-W. Li.
\newblock Speech recovery based on the linear canonical transform.
\newblock {\em Speech Commun}, 55(1):40--50, 2013.

\bibitem{SSS2015}
K.~K. Sharma, L.~Sharma, and S.~Sharma.
\newblock On bandlimitedness of signals in the {2D}-nonseparable linear
  canonical transform domains.
\newblock {\em Signal Image Video Process}, 9(4):941--946, 2015.

\bibitem{SHZ2016}
J.~Shi, M.~Han, and N.~Zhang.
\newblock Uncertainty principles for discrete signals associated with the
  fractional {Fourier} and linear canonical transforms.
\newblock {\em Signal Image Video Process}, 10(8):1--7, 2016.

\bibitem{SLZ2014B}
J.~Shi, X.~Liu, and N.~Zhang.
\newblock Generalized convolution and product theorems associated with linear
  canonical transform.
\newblock {\em Signal Image Video Process}, 8:967--974, 2014.

\bibitem{SSZZ2012}
J.~Shi, X.~Sha, Q.~Zhang, and N.~Zhang.
\newblock Extrapolation of bandlimited signals in linear canonical transform
  domain.
\newblock {\em IEEE Trans Signal Process}, 60(3):1502--1508, 2012.

\bibitem{Stern2006B}
A.~Stern.
\newblock Why is the linear canonical transform so little known?
\newblock In {\em AIP Conf Proc Ser}, volume 860, pages 225--234, 2006.

\bibitem{Stern2008}
A.~Stern.
\newblock Uncertainty principles in linear canonical transform domains and some
  of their implications in optics.
\newblock {\em JOSA A}, 25(3):647--652, 2008.

\bibitem{Wei2016}
D.~Wei.
\newblock Image super-resolution reconstruction using the high-order derivative
  interpolation associated with fractional filter functions.
\newblock {\em IET Signal Process}, 10(9):1052--1061, 2016.


\bibitem{WL2014}
D.~Wei and Y.~Li.
\newblock Reconstruction of multidimensional bandlimited signals from
  multichannel samples in linear canonical transform domain.
\newblock {\em IET Signal Process}, 8(6):647--657, 2014.

\bibitem{WL2016}
D.~Wei and Y.~M. Li.
\newblock Generalized sampling expansions with multiple sampling rates for
  lowpass and bandpass signals in the fractional {Fourier} transform domain.
\newblock {\em IEEE Trans Signal Process}, 64(18):4861--4874, 2016.

\bibitem{WRL2012}
D.~Wei, Q.~Ran, and Y.~Li.
\newblock A convolution and correlation theorem for the linear canonical
  transform and its application.
\newblock {\em Circuits Syst Signal Process}, 31(1):301--312, 2012.

\bibitem{WRL2009}
D.~Wei, Q.~Ran, Y.~Li, J.~Ma, and L.~Tan.
\newblock A convolution and product theorem for the linear canonical transform.
\newblock {\em IEEE Signal Process Lett}, 16(10):853--856, 2009.

\bibitem{WWL2016}
D.~Wei, R.~Wang, and Y.-M. Li.
\newblock Random discrete linear canonical transform.
\newblock {\em JOSA A}, 33(12):2470--2476, 2016.

\bibitem{XQ2014}
Q.~Xiang and K.~Qin.
\newblock Convolution, correlation, and sampling theorems for the offset linear
  canonical transform.
\newblock {\em Signal Image Video Process}, 8(3):433--442, 2014.

\bibitem{XS2013}
L.~Xiao and W.~Sun.
\newblock Sampling theorems for signals periodic in the linear canonical
  transform domain.
\newblock {\em Opt Commun}, 290:14--18, 2013.

\bibitem{XTZ2017}
L.~Xu, R.~Tao, and F.~Zhang.
\newblock Multichannel consistent sampling and reconstruction associated with
  linear canonical transform.
\newblock {\em IEEE Signal Process Lett}, 24(5):658--662, 2017.

\bibitem{Zhang2016}
Q.~Zhang.
\newblock Zak transform and uncertainty principles associated with the linear
  canonical transform.
\newblock {\em IET Signal Process}, 10(7):791--797, 2016.

\bibitem{ZTLW2009}
J.~Zhao, R.~Tao, Y.-L. Li, and Y.~Wang.
\newblock Uncertainty principles for linear canonical transform.
\newblock {\em IEEE Trans Signal Process}, 57(7):2856--2858, 2009.

\end{thebibliography}
\end{document}